\let\@copyrightspace\relax
\begin{document}
\newtheorem{theorem}{Theorem}
\newtheorem{corollary}{Corollary}

\title{Defining and Mining Functional Dependencies in Probabilistic Databases}

\numberofauthors{2} 
\author{
\alignauthor
Sushovan De\\
       \affaddr{Deptt. of Computer Science and Engineering}\\
       \affaddr{Arizona State University}\\
       \email{sushovan@asu.edu}
\alignauthor
Subbarao Kambhampati\\
       \affaddr{Deptt. of Computer Science and Engineering}\\
       \affaddr{Arizona State University}\\
       \email{rao@asu.edu}
}

\date{22 July 2010}

\maketitle

\begin{abstract}
Functional dependencies -- traditional, approximate and conditional are of critical importance in relational databases, as they inform us about the relationships between attributes. They are useful in schema normalization, data rectification and source selection. Most of these were however developed in the context of deterministic data. Although uncertain databases have started receiving attention, these dependencies have not been defined for them, nor are fast algorithms available to evaluate their confidences. This paper defines the logical extensions of various forms of functional dependencies for probabilistic databases and explores the connections between them. 
We propose a pruning-based exact algorithm to evaluate the confidence of functional dependencies, a Monte-Carlo based algorithm to evaluate the confidence of approximate functional dependencies and algorithms for their conditional counterparts in probabilistic databases.
Experiments are performed on both synthetic and real data evaluating the performance of these algorithms in assessing the confidence of dependencies and  mining them from data.
We believe that having these dependencies and algorithms available for probabilistic databases will drive adoption of probabilistic data storage in the industry.
\end{abstract}

%

\section{Introduction}
\label{sec-intro}

A lot of data generated today, especially that obtained from the web, is dirty, untrustworthy or uncertain. Yet we continue to store them in database engines that are ill-equipped to handle uncertainty. Handling uncertainty isn't a simple case of adding a `probability' attribute -- uncertain data has correlations, causations and query processing on such data is a probabilistic inference problem. It should be a top-priority for the database community to remove barriers that prevent data engineers from adopting probabilistic databases. Data obtained from the web becomes uncertain for a variety of reasons including the hardness of schema mapping, record linkage and presence of untrustworthy or dirty data. Traditionally, these problems have been tackled by taking the data one tuple at a time, and choosing the best possible alternative for it. Once the alternative is decided, the data is considered to be completely correct for the purpose for further analysis. However, probabilistic databases allow the data to fully reflect the different alternatives that were available. Further processing can then take into account all the alternatives, not just the most likely one. 

One of the barriers to using probabilistic databases is the lack of well-defined dependencies between attributes. In the case of traditional databases, such dependencies, in the form of exact and approximate functional dependencies, are used for both fast query processing and rectification of data. There are algorithms that will evaluate functional dependencies between attributes to aid in schema normalization~\cite{bernstein1976synth}, that will evaluate approximate functional dependencies (AFD), which helps filling out missing data in incomplete databases\cite{wolf2009query}. There are also conditional functional dependencies (CFDs), which help in cleaning and correcting data \cite{bohannon2007conditional}. However, such dependencies and algorithms are missing for probabilistic data. 

In this paper we extend these very useful dependencies so that they work with probabilistic databases in general. We generalize FDs to probabilistic functional dependencies (pFD), AFDs to probabilistic approximate functional dependencies (pAFD), and their conditional counterparts respectively to CpFD and CpAFD. We also investigate the relationship between these dependencies. In particular, we point out which of these dependencies are generalizations of, and hence subsume, others. We also provide fast algorithms for evaluating the confidence of these dependencies on probabilistic database, with a special focus on tuple-independent and tuple-disjoint independent databases \cite{Dalvi2007}. With the help of these algorithms, we describe how we can mine these dependencies from data, by using efficient methods to prune the search space of dependencies.

{\bf Motivating example:} 
Assume that two or more astronomers are observing and recording various objects in the sky, like in the Sloan Digital Sky Survey~\cite{sloanskysurvey}. They note various attributes of the objects, including the color, type, speed, frequency of oscillation, and position. Each observer may have his/her doubts about the data, so they may choose to enter alternatives as options in a probabilistic database. 
Such data is most naturally represented as a probabilistic database, where each tuple represents a different object in the sky and reflects the  curator's confidence in the observer as well as the observer's confidence in the data. Having represented this data, the curator can run a pAFD finding algorithm to discover dependencies that were as yet unknown, of the form $\left(color, speed \rightsquigarrow type\right)$. If the curator is aware of some dependencies that are expected to hold, he can verify their validity by running the appropriate dependency checking algorithm on this data.

\begin{figure}
\begin{center}
\includegraphics[keepaspectratio, width=250pt, clip, trim=0pt 0pt 0pt 0pt]{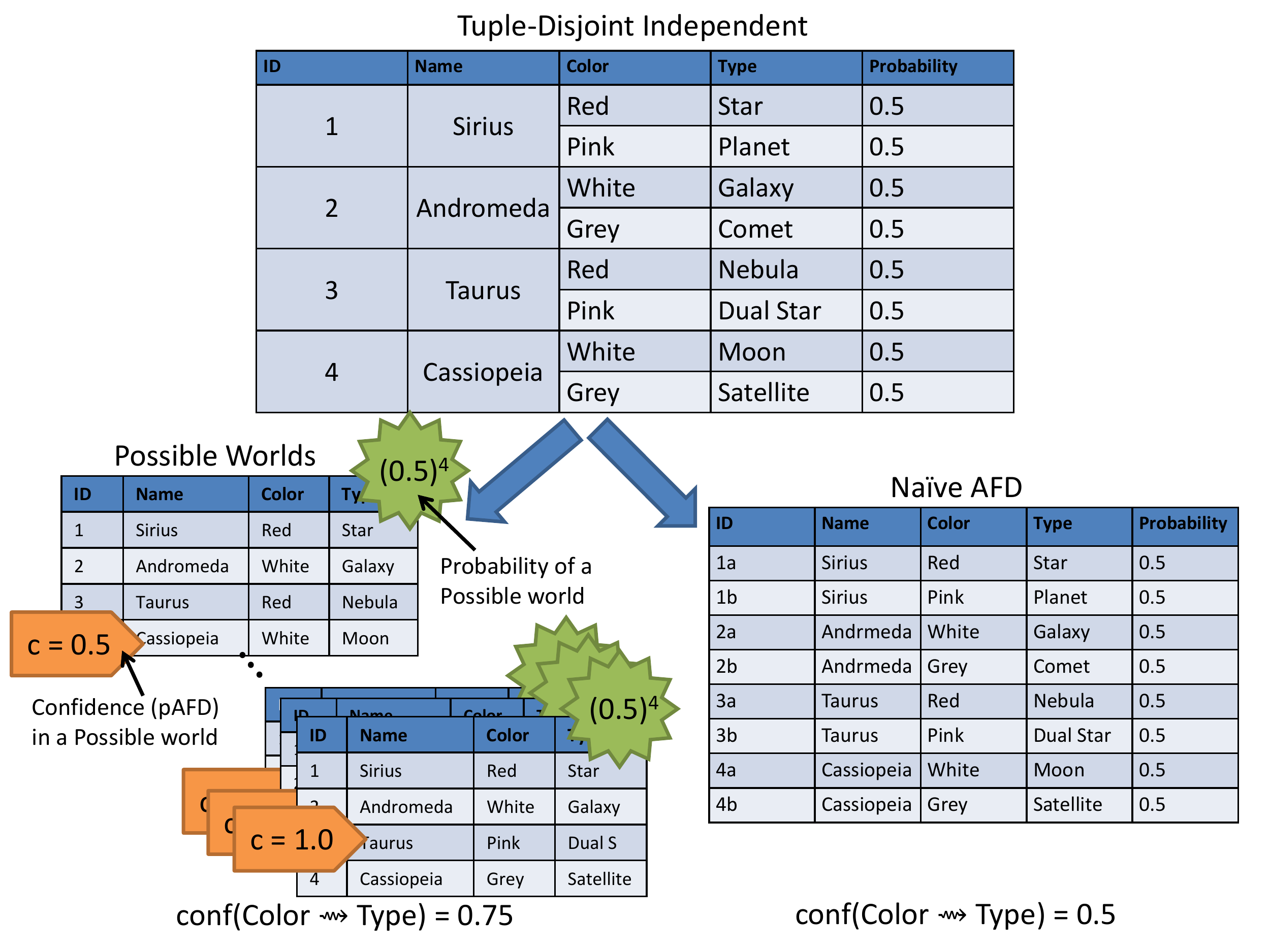}
\caption{Why pAFDs differ from a na\"ive interpretation of AFD. {\bf (top)} A tuple-disjoint independent database. {\bf (right)} An attempt to 
find the AFD na\"ively results in confidence of 0.5. {\bf (left)} The semantically correct value of 0.75.}
\vspace{-20pt}
\label{fig-pafd-importance}
\end{center}
\end{figure}

One pertinent question is whether these extensions are important or interesting enough to consider, or whether AFDs can be used directly. Pending empirical study, (see Section \ref{sec-real-data}), we demonstrate their importance with an example. Let us say that we have a probabilistic database as shown in Figure \ref{fig-pafd-importance}. We are interested in finding out whether or not the dependency Color $\rightsquigarrow$ Type has a high confidence. Strictly speaking, we cannot evaluate its AFD, because the concept of AFD does not apply to probabilistic data. However, we can apply what might be called an `intuitive' extension of an AFD to probabilistic data by considering each option as an independent tuple and weighing them according to their probabilities. If we do that, Figure~\ref{fig-pafd-importance} shows that this na\"ive method calculates a probability that is quite low, and different from the correct value dictated by probabilistic semantics.

The rest of the paper is organized as follows. We start by defining the various dependencies for probabilistic databases, and in the following section we explore the theoretical connections between them. In Section \ref{sec-algorithms} we propose some algorithms to evaluate the confidence of these quantities and show how to mine them, and in the section following that, show experiments that evaluate the effectiveness of these algorithms. We end with a discussion of related work and present our conclusions in Section \ref{sec-conclusions}.

\begin{figure}
\begin{center}
\includegraphics[keepaspectratio, width=200pt, clip, trim=0pt 80pt 0pt 0pt]{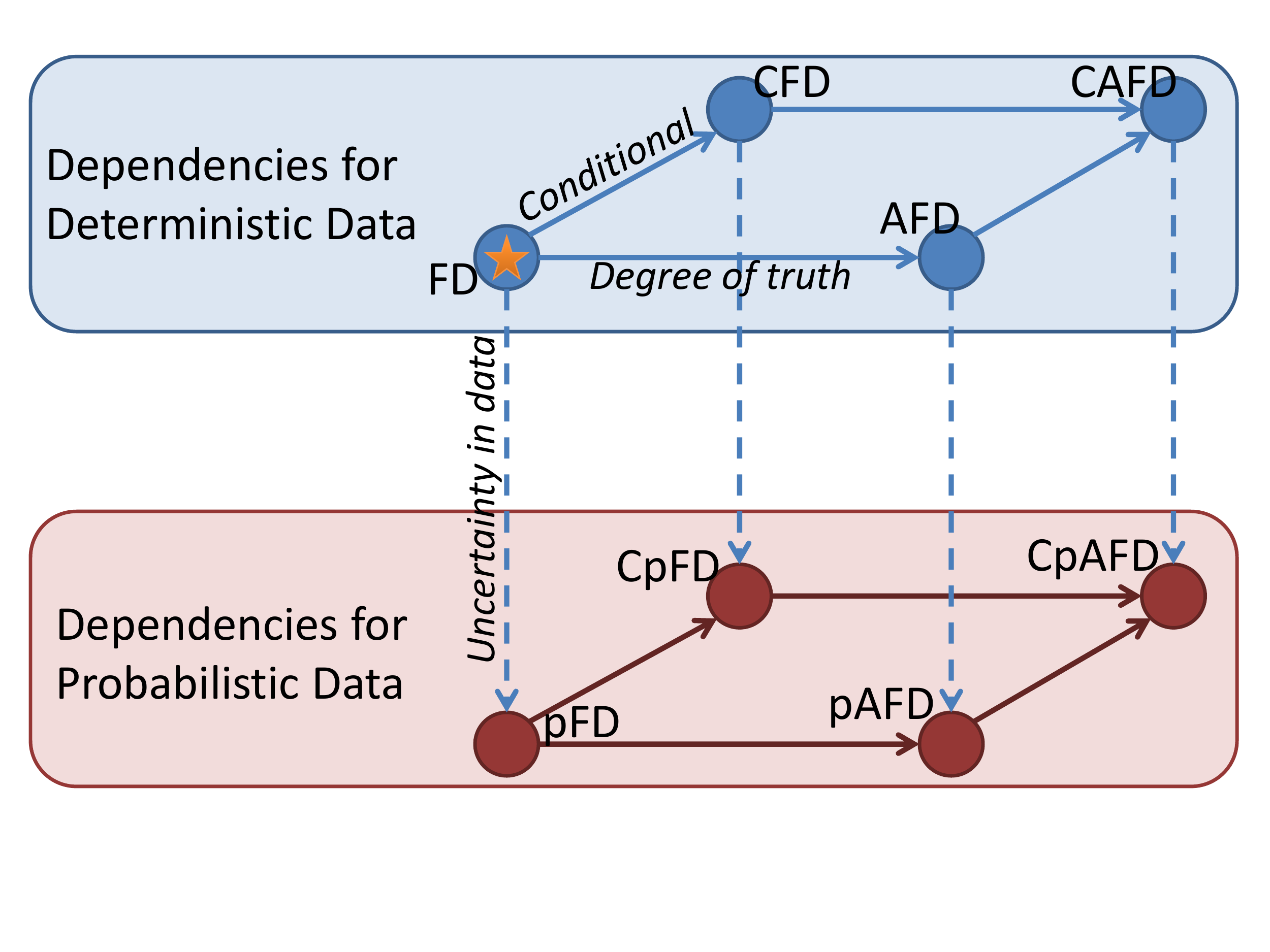}
\caption{The relationship between various dependencies. FDs are the least general, when extended by adding degree of truth, we get AFD; when extended to probabilistic databases, we get pFD, when extended to include conditional dependencies, we get CFDs. Combinations of these properties give rise to the other dependencies.}
\vspace{-20pt}
\label{fig-spectrum-dep}
\end{center}
\end{figure}
\section{Definitions}
\label{sec-definitions}

\subsection{Probabilistic Database}
\label{sec-pdb}
In this paper we follow the possible worlds model for a probabilistic database:  a probabilistic database is  a collection of possible worlds, with each possible world being a deterministic database and an associated probability. 
Figure \ref{fig-pafd-importance} shows a probabilistic database. The possible worlds representation is the set of relations on the bottom left. 


We denote a deterministic relation by symbol $R$, which has attributes $(A_1, A_2, ... , A_n)$. Sets of attributes are denoted by $X$ or $Y$. Uncertain relations are denoted by $D$, and each uncertain relation comprises possible worlds $(P_1, P_2, ... P_m)$, with  and attributes $(A_1, A_2, ... , A_n)$.

\subsection{Probabilistic Functional Dependencies (pFD)}
\label{sec-pfd}

Given a deterministic relation $R$ a functional dependency (FD) is defined as $(X \rightarrow Y)$, where $X$ and $Y$ are sets of attributes. The FD is said to hold if whenever two tuples share the same values of $X$, they have the same values of $Y$.

We can generalize this idea to probabilistic databases, as shown in Figure \ref{fig-spectrum-dep} by the `uncertainty in data' axis. Given an uncertain relation $D$, a probabilistic functional dependency, pFD, is defined as $(X \rightarrow Y)$. The pFD is associated with a quantity called its confidence which is the fraction of possible worlds in which the corresponding FD holds. 

Consider the possible world representation in  Figure \ref{fig-pafd-importance}. In the first possible world, the data in tuple 1, {\it (Red, Star)} conflicts with tuple 3, {\it (Red, Nebula)}. As a result the contribution of that possible world towards the confidence of the pFD $Color \rightarrow Type$ is zero. The last possible world in the figure shows a non-zero contribution. The FD holds within the world, hence the entire probability of the world is added to the pFD confidence score.

It should be noted that pFDs suffer from the same kind of flaws that traditional FDs do. If the data is dirty, just a few tuples that do not conform to the pFD might cause an entire possible world to be not counted. We can address these concerns with pAFDs.

\subsection{Probabilistic Approximate Functional \\Dependencies (pAFD)}
\label{sec-pafd}
AFDs generalize FDs by adding the concept of a `degree of truth' to an FD, which is also illustrated in Figure \ref{fig-spectrum-dep}. Given a deterministic relation $R$, an approximate functional dependency AFD is defined as $(X \rightsquigarrow Y)$, where $X$ and $Y$ are sets of attributes, with $X$ approximately determining $Y$. The confidence of an AFD may be defined in various ways. Following  \cite{huhtala1999tane},  we define the AFD confidence as as one minus the minimum fraction of tuples that need to be removed from the relation for the FD to hold.

In an uncertain relation $D$, a probabilistic approximate functional dependency, pAFD, is defined as $(X \rightsquigarrow Y)$. The confidence of the pAFD is the expected confidence of the AFD over the possible worlds, (i.e. average of the confidence of the corresponding AFD in each possible world weighted by the probability of that world). 

For example, in Figure \ref{fig-pafd-importance}, in the first possible world, the data in tuple 1, {\it (Red, Star)} conflicts with tuple 3, {\it (Red, Nebula)}. As a result only one of them can be considered as contributing towards the AFD within that possible world. A similar argument holds for tuples 2 and 4. We can therefore say that among the four tuples in the possible world, two support the AFD, and two don't; hence the confidence of the AFD is 0.5. 
In situations where data is both uncertain and noisy, pAFDs are useful 
for judging  the relationship between attributes.

\subsection{Conditional Probabilistic Functional\\ Dependencies (CpFD)}
\label{sec-cpfd}

We can extend functional dependencies in yet another way -- by making them conditional on specific values of the data. Given a deterministic relation $R$, a CFD is a pair $(X \rightarrow Y, T_p)$. $X \rightarrow Y$ is a standard FD. $T_p$ is the pattern tableau, which is a relation with the attributes $(X \cup Y)$. The semantics of a CFD \cite{bohannon2007conditional} are as follows: A CFD holds on R if the corresponding FD holds on the subset of tuples that {\it matches} the pattern tableau. Every tuple in $T_p$ is either a constant or the wildcard character (\_). A constant $a$ in $T_p$ matches only the constant $a$ in $R$. The wildcard `\_' in $T_p$ matches any value in the real tuple. For a pair of tuples to violate the CFD, they must agree on every attribute in $X$ but different values for some attributes in $Y$, and the set of attributes $X \cup Y$ must match the pattern tableau.



We can extend this concept to probabilistic databases. Given an uncertain relation $D$, a conditional probabilistic functional dependency, CpFD, is the pair $(X \rightarrow Y, T_p)$, where $X$ and $Y$ are sets of attributes and $T_p$ is the pattern tableau. The confidence of a CpFD is the fraction of possible worlds where the corresponding CFD holds.


A CpFD is an extension of the concept of CFD to uncertain databases, but it does not tolerate dirty data. If even one tuple in a possible world violates the CpFD, the entire possible world probability is not counted.

\subsection{Conditional Probabilistic Approximate \\Functional Dependencies (CpAFD)}
\label{sec-cpafd}

Given a deterministic relation $R$, a conditional approximate functional dependency, CAFD, is defined as the pair $(X \rightsquigarrow Y, T_p)$. The confidence of the CAFD is one minus the fraction of tuples that need to be removed from the subset of tuples that match the pattern tableau $T_p$ such that the CFD $(X \rightarrow Y)$ holds.

CAFDs do support a fractional confidence value, thus they can be used where the data is expected to be noisy and when the dependency holds only conditionally. Therefore it is useful to extend for probabilistic data. Given an uncertain relation $D$, a conditional probabilistic approximate functional dependency (CpAFD) is the pair $(X \rightsquigarrow Y, T_p)$. Its confidence is the weighted average of the confidence of the corresponding CAFD in each possible world, weighted by the probability of that possible world. 

A CpAFD extends the notion of a functional dependency in the most general way among all of the dependencies discussed in this paper. It supports fractional confidence values, possible world semantics, as well as operating on a select part of the database.


\section{Relationships among dependencies}
\label{sec-relationships}

The dependencies defined in Section \ref{sec-definitions} are all generalizations of FDs, as shown in Figure \ref{fig-spectrum-dep}. It is natural therefore, that we can express the less general dependencies in terms of the more general ones, and that we can induce relationships among them. That is what we will attempt to do in this section.

An FD is an AFD of confidence 1. AFDs allow dependencies that hold approximately, thus they extend FDs along the `degree of truth' axis as shown in Figure \ref{fig-spectrum-dep}. It should be noted that the confidence of an AFD can never be zero. This is because, in a relation $R$ with at least one tuple, even if all different values of $Y$ occur with the same values of $X$, we can remove all tuples except for one for each different set of values of $X$ and have a non-zero set of tuples left in our database. Thus, the confidence of the AFD, which is is one minus the fraction of tuples removed, is non-zero.  We can make a similar comparison between pFDs and pAFDs. 

\begin{theorem}
\label{th-pafd-greater-pfd}
The confidence of a pAFD is always larger than the confidence of the corresponding pFD.
\end{theorem}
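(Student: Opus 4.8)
The plan is to compare the two confidences possible world by possible world and then average with the world probabilities. By definition the pFD confidence is $\sum_i p_i \, b_i$, where $p_i$ is the probability of possible world $P_i$ and $b_i = 1$ if the FD $X \to Y$ holds in $P_i$ and $b_i = 0$ otherwise; the pAFD confidence is $\sum_i p_i \, c_i$, where $c_i$ is the confidence of the AFD $X \rightsquigarrow Y$ evaluated in the deterministic relation $P_i$. Since the $p_i$ are nonnegative and sum to $1$, it suffices to show $c_i \ge b_i$ for every $i$, because then multiplying by $p_i$ and summing preserves the inequality.

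This per-world comparison splits into two cases. If the FD holds in $P_i$, then zero tuples need to be removed for it to hold, so $c_i = 1 = b_i$. If the FD fails in $P_i$, then $b_i = 0$, while $c_i > 0$: this reuses the observation already made in this section that an AFD on any relation containing at least one tuple has strictly positive confidence, since keeping a single representative tuple for each distinct value of $X$ leaves a nonempty conflict-free subrelation, so the fraction of tuples removed is strictly below $1$. In either case $c_i \ge b_i$, and summing against the weights $p_i$ yields pAFD confidence $\ge$ pFD confidence. The inequality is strict exactly when some possible world of positive probability violates the FD; when the FD holds in every world both confidences equal $1$, so the statement is to be read with ``$\ge$'' (or restricted to databases whose pFD confidence is below $1$).

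The only delicate point — this theorem is otherwise elementary — is boundary behavior: empty possible worlds, where ``fraction of tuples removed'' is a $0/0$ form. I would dispose of these with the standard convention that the FD holds vacuously on an empty relation and $c_i = 1$ there, so such worlds contribute equally to both sides and cannot disturb the inequality. With that convention in place, the whole argument collapses to the two-case per-world comparison above, and no further machinery is needed.
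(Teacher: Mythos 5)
Your proof is correct and follows essentially the same route as the paper's: a per-world comparison (the pAFD confidence is $1$ where the FD holds and strictly positive where it fails) followed by averaging over the possible-world probabilities. Your added care about empty worlds and your observation that the inequality should really be read as ``$\ge$'' (since both confidences equal $1$ when the FD holds in every world) are refinements the paper's own proof glosses over, but the underlying argument is the same.
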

\begin{proof}
In every possible world that the pFD holds, the confidence of the pAFD is 1. In every possible world that the pFD does not hold, the confidence of the pAFD is non-zero. Thus the confidence of the pAFD, which is the weighted sum of the confidences in each possible worlds, would be greater than that of the pFD.
\end{proof}

However, the reverse is not true. The information that a pAFD holds with a very high confidence does not imply that the pFD will have a high confidence, in fact, the confidence of the pFD may even be zero, as there might be a few tuples in every possible world that do not conform to the pFD.



Conditional dependencies extend each of the previous dependencies so that they can be specified over only a part of the data. This is illustrated in Figure \ref{fig-spectrum-dep} by the `conditional' axis. However, the generalization to conditional dependencies can introduce inconsistencies. For example, if we know that an FD holds on a relation, the corresponding CFD is not guaranteed to hold, since the tableau could introduce impossible cases \cite{bohannon2007conditional}. For example, in a CFD $(A \rightarrow B)$, if the pattern tableau has two tuples, one requiring the value of $B$ to be $b$, the other requiring the value to be $c$, the CFD is clearly inconsistent, and no relation can satisfy it. However, in normal use cases, where the tableau of the CFD has been induced from the data, or else calculated in a non-malicious manner, intuitively we can state that if the FD holds, the CFD will also hold. The converse, however, is not true. If a CFD with a non-trivial tableau holds over a relation $R$, then we cannot guarantee that the corresponding FD holds.

A pattern tableau may select a non-zero number of tuples from  some possible worlds but no tuples from others. In the special case where the tableau selects some tuples from each possible world, we can state that the confidence of the CpFD and the confidence of a CFD are equal. More generally, we can state the following theorem:

\begin{theorem}
\label{th-cpfd-generalizes-pfd}
If a pFD $(X \rightarrow Y)$ holds over an probabilistic relation $D$ with confidence $p$, then the CpFD $(X \rightarrow Y, T_p)$ also holds over $R$ with a confidence not less than $p$, provided the CpFD is not inconsistent. 
\end{theorem}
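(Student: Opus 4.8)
The plan is to mirror the proof of Theorem~\ref{th-pafd-greater-pfd}: write each confidence as a sum of world probabilities and compare the two sums world by world. Let $\Pr(P_i)$ denote the probability of possible world $P_i$, let $S$ be the set of indices $i$ such that the deterministic FD $X\rightarrow Y$ holds on $P_i$, and let $S'$ be the set of indices $i$ such that the deterministic CFD $(X\rightarrow Y,T_p)$ holds on $P_i$. By definition the pFD has confidence $p=\sum_{i\in S}\Pr(P_i)$ and the CpFD has confidence $\sum_{i\in S'}\Pr(P_i)$. Since every $\Pr(P_i)\ge 0$, it is enough to prove the inclusion $S\subseteq S'$; the inequality on confidences then follows at once.

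To get $S\subseteq S'$, I would fix a world $P_i$ with $i\in S$ and argue that the CFD $(X\rightarrow Y,T_p)$ holds on $P_i$. By the semantics of Section~\ref{sec-cpfd}, this CFD holds on $P_i$ exactly when the FD $X\rightarrow Y$ holds on the sub-relation $P_i'\subseteq P_i$ of tuples that match the pattern tableau. But a functional dependency that holds on a relation also holds on every sub-relation of it --- a violating pair of tuples in $P_i'$ would already be a violating pair in $P_i$ --- so $X\rightarrow Y$ holds on $P_i'$, the CFD holds on $P_i$, and $i\in S'$. Summing then gives $\sum_{i\in S'}\Pr(P_i)\ge\sum_{i\in S}\Pr(P_i)=p$, which is the claim.

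The delicate point, and the reason for the ``provided the CpFD is not inconsistent'' hypothesis, is the per-world implication ``FD holds on $P_i$'' $\Rightarrow$ ``CFD holds on $P_i$''. Under the sub-relation reading above this implication is immediate, but a pattern tableau can also pin a $Y$-attribute to a constant (the phenomenon flagged for CFDs in Section~\ref{sec-relationships}), and such a binding may be legitimately violated by a relation on which the plain FD holds, which would break $S\subseteq S'$. I would resolve this by adopting the convention of \cite{bohannon2007conditional} that ``not inconsistent'' excludes exactly the tableaux imposing such unsatisfiable or FD-incompatible constant bindings, and then checking that for every remaining tableau the characterization of CFD satisfaction is precisely ``$X\rightarrow Y$ holds on the matching subset,'' so the argument of the previous paragraph applies verbatim. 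I expect this bookkeeping --- making the exclusion precise, rather than the probabilistic averaging --- to be the main obstacle.
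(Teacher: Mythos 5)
Your proposal is correct and follows essentially the same route as the paper: a world-by-world argument that whenever the FD holds in a possible world it continues to hold on the sub-relation selected by the tableau, so every world counted toward the pFD confidence is also counted toward the CpFD confidence. You are in fact more careful than the paper on the one delicate point --- constant bindings on $Y$-attributes in the tableau, which the paper waves away via the ``not inconsistent'' proviso and an informal appeal to non-malicious tableaux --- so no gap to report.
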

\begin{proof}
Suppose the pFD holds on a certain possible world of $D$. The pattern tableau would then cause certain tuples to be eliminated from consideration. Even after elimination of a few tuples, the pFD will continue to hold, by definition. Thus that possible world will contribute towards the confidence of the CpFD. On the other hand, if there is a possible world in which the pFD does not hold, it is possible that those tuples that cause the pFD not to hold will be eliminated by the pattern tableau, causing the possible world to contribute to the confidence of the CpFD. So the fraction of possible worlds contributing to the CpFD is not smaller than the fraction contributing to confidence of the pFD.
\end{proof}

The same argument does not hold for pAFD and CpAFDs. In the case of CpAFDs, elimination of certain tuples may reduce its confidence. For example, consider the  CpAFD $(A \rightsquigarrow  \{B, C\}, T_1)$ where $T_1$ consists of the tuple $(\_, b_2, \_)$. If one of the possible worlds, $P$, of relation $D$ contains a 50 tuples of $(a_1, b_1, c_1)$ and 50 tuples of the form $(a_1, b_2, c_2),$ $(a_1, b_2, c_3)$ \ldots $(a_1, b_2, c_{51})$  the pAFD would have confidence 0.50, but the CpAFD would have confidence 0.02 (after eliminating the 50 tuples not matching $T_1$).

\section{Assessing and mining probabilistic dependencies}
\label{sec-algorithms}
We consider two problems in the presented framework:

{\bf Evaluating confidence:} Given a relation $R$, and a specified dependency find the confidence of the dependency. 

{\bf Mining dependencies:} Given a relation $R$, find a minimal set of dependencies 
that is equivalent to or more general\footnotemark than any set of dependencies that holds over $R$ with a confidence higher than a given threshold. 

\footnotetext{A set of dependencies $\Sigma_1$ is said to be more general than another set $\Sigma_2$ if every dependency in $\Sigma_2$ can be inferred from $\Sigma_1$ using Armstrong's axioms and the inference rules for conditional dependencies in \cite{bohannon2007conditional}.}

In the following sections we focus on evaluating the confidence of the dependencies, since it is the first step towards mining them. Once we have fast algorithms for evaluating the confidence, we then use methods in \cite{wolf2009query} to prune the space of dependencies we have to search through in order to mine them.

While we're interested in computing the confidence for any probabilistic database, we shall see that in the very general case, evaluation is exponential. So we also consider special cases, mainly focusing on tuple-disjoint independent (TDI) databases \cite{Dalvi2007}, which are a popular special case of a probabilistic database, in which every tuple with distinct keys is independent. We can think of this as a set of uncertain relations, where each tuple has a set of ``options" with each option having a probability. The decision about which option to pick for each tuple is taken independently. This significantly reduces the types of uncertainty and correlations that can occur among the tuples, but also makes many operations on the database tractable. These algorithms also work for Tuple-independent databases (TI), where each tuple has an existential probability, but does not have any options. The straightforward adaptation of these algorithms to TI databases is explained in the Appendix \ref{apx-algo-for-ti}.

\subsection{Assessing the confidence of a pFD}
\label{sec-algo-pfd}


\begin{figure}
\begin{center}
\includegraphics[keepaspectratio, width=250pt, clip, trim=0pt 140pt 0pt 0pt]{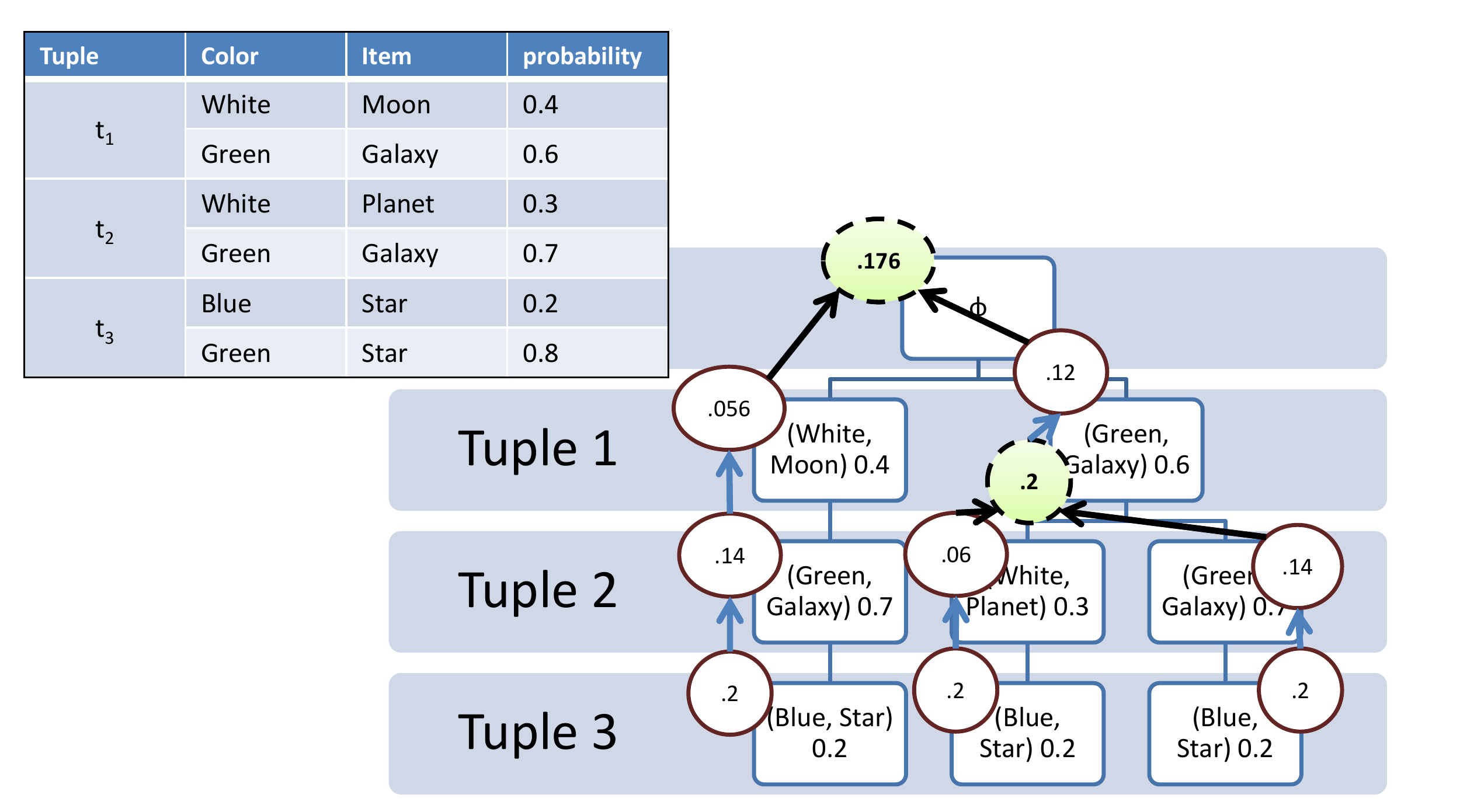}
\caption{The algorithm for computing the confidence of pFD. The dotted circles represent adding the probabilities from the child branches, the solid circles represent multiplication of probabilities of the child and the parent.}
\vspace{-20pt}
\label{fig-tree-method}
\end{center}
\end{figure}

In a generalized probabilistic database, represented by its possible worlds, finding the pFD would be polynomial in the combined size of the possible worlds, which is typically exponential in the number of entities it represents. For a more compact representation of a probabilistic database like TDI or TI, a na\"ive evaluation of the confidence of a pFD in a probabilistic database would likely take an exponential time in the number of tuples, since we would have to effectively generate the possible worlds and add up the probability of those in which the corresponding FD holds. Ordinarily, we would use Monte Carlo to sample the possible worlds (such as we will employ later to find the confidence of pAFDs), however, that approach does not work very well for pFD, since a single option that violates the dependency can bring the contribution of the entire sample to zero.

We now present an efficient algorithm that finds the confidence of a pFD in a tuple-disjoint independent database. This algorithm is exact and has the  property of being  exponential only in the {\em cardinality of the domain} of the attributes, rather than the number of tuples. In practice our algorithm finds useful probabilistic functional dependencies very efficiently, as most desirable pFDs have low specificity.

The complexity of the algorithm can be analyzed in terms of specificity is defined as the support of the association rules that make up the dependency \cite{aravind2008thesis}. Specificity is high when the association rules have a very low support, and the rule becomes less valuable. For example, a rule with high specificity such as {\it (Social Security Number $\rightarrow$ Color of hair)} will definitely hold, since SSN is a key, but is not very useful. On the other hand, an FD that states {\it (Zip Code $\rightarrow$ Street Name)} for addresses in England is a useful one. Each zip code appears multiple times in the database and the dependency gives us useful semantic information even though zip code is not a key.
For a low specificity pFD, the number of values a particular attribute can take is much less than the number of tuples, which makes our algorithm run more efficiently. A formal definition of specificity and its adaptation to TDI databases is presented in the Appendix \ref{apx-specificity}.



The algorithm exploits the fact that we are using a tuple-disjoint independent database. It keeps track of a set of association rules that comprise the pFD at any point in the algorithm. We pick the tuples one by one, and treat them independently. We next optimize the calculation of the pFD using two pruning criteria. First, if a particular option does not comply with the current set of rules, then the entire set of possible worlds that include that option will contribute zero confidence for the pFD. So we can terminate that branch right away. Second, all the options in the tuple comply with the ruleset, then the confidence of the pFD does not change whether or not we pick that tuple (it's contribution is 1). We can therefore ignore the tuple.

We can express the evaluation of this algorithm as a tree, see Figure \ref{fig-tree-method}. The tree branches whenever more than one option in a tuple is consistent with the current set of rules. Using the two criteria in the previous paragraph, we can choose an optimal order in which to pick the tuples so that the expression tree has the minimum width. We can then prove that the algorithm is exponential only in the cardinality of the domain of the attributes.


Once the execution reaches the leaf node, we track back. At each stage, we sum up the probability of all the branches that originated at that point. Then we multiply the result with the probability of the parent to compute the contribution of this branch to the confidence of the pFD.

\begin{algorithm}[t]
\DontPrintSemicolon

\SetKwData{Left}{left}\SetKwData{This}{this}\SetKwData{Up}{up}
\SetKwFunction{Union}{Union}\SetKwFunction{FindCompress}{FindCompress}
\SetKwInOut{Input}{input}\SetKwInOut{Output}{output}

\Input{A TDI database $D$, a pFD $P$ and the current set of rules $R$}
\Output{The confidence of $P$ in $D$}

\Begin{
$P \longleftarrow 0$\;
$T = ChooseBestRemainingTuple(D, R)$ \;
\If{$EntireTupleisCompatible(R)$}{
	\Return{ $ FindPfdRecursive( D \setminus T, R )$}\;
}
\For{Options $O \in T$} {
	\If{IsCompatible(O, R)}{
		$AddRule(O, R)$\;
		$P \longleftarrow P + Prob.(O) \times FindPfdRecursive( D \setminus T, R )$\;
		$RemoveFromRule(O, R)$\;
	}
}
\Return{P}\;
}
\caption{FindPfdRecursive\label{eq-tree-pfd}}
\end{algorithm}

The algorithm is formally presented as Algorithm \ref{eq-tree-pfd}. The running time of the algorithms is improved by the introduction of the function $ChooseBestRemainingTuple$, which picks from the remaining tuples those that do not cause branching in the evaluation tree. There are three kinds of such tuples -- those that completely comply with the current set of rules (their contribution is 1), or those that have no options that comply with the current set of rules (the branch is immediately terminated), or those that have only one option that complies with the current set of rules (there is no branching).

\subsection{Assessing the confidence of a pAFD}
\label{sec-algo-pafd}
It was possible for us to considerably speed up the calculation of the confidence of a pFD because as soon as an execution branch violated the association rules for that branch, we were able to terminate it. However, this technique does not work for calculating the confidence of a pAFD. During the execution of a similar algorithm for pAFD, if a tuple violates the majority association rule, it does not terminate the branch -- it merely reduces the confidence within that possible world. In addition to this, the association rules themselves might change once enough counterexamples are observed. As a result the algorithm becomes exponential time for a pAFD. We can, however, attempt to calculate the confidence of a pAFD approximately. 

{\bf Deterministic approximation:} One of the ways in which a pAFD may be approximated is to first create a deterministic database by completely ignoring the probabilities and treating every uncertain option as a tuple of a deterministic database. Obviously, this will violate the semantics of disjoint possible worlds. We then find the AFD over this deterministic database, and report the confidence of that AFD as the confidence of the pAFD.
\smallskip

{\bf Unioned approximation:} A better approximation would be to create a tuple independent (TI) database by taking every uncertain option in the database along with its probability and making it a tuple of the TI database. This causes us to lose all the correlation between the options of the same tuple in a TDI database, and we treat them as independent. Since this is effectively creating a union of all the options, we call this approach the unioned approximation to finding the confidence of a pAFD. We then find the pAFD over this tuple-independent database. Having the tuples completely independent of each other makes finding the confidence much easier. To find the confidence of the dependency $(X \rightsquigarrow Y)$, we can find all the association rules $(x, y)$ in the database. For each distinct value of $X$, we find that value $y_{max}$ of $Y$ which has the maximum support in the database. The pAFD value is given by $\sum{support(y_{max})}/\sum{support(x)}$.
\smallskip

{\bf Monte Carlo:} In order to maintain intra-tuple correlations, we can sample a subset of the possible worlds, and compute the confidence of the pAFD over that subset. We can then scale it up appropriately to find the confidence of the pAFD over the entire relation. It is clear that the more representative a subset we sample, the more accurate our pAFD computation will be. To choose a well-represented subset of possible worlds, we use a Monte Carlo simulation.

It is worth noting that the Monte Carlo technique does not make any assumptions about the Probabilistic Database under consideration, specifically, {\em it is not restricted to TDI databases}. For the case of the TDI database, we take one tuple at a time. For every tuple, we generate a random number to choose which option is to be picked. This is done proportional to the probability of the option. Since by  definition, different tuples are independent of each other, this results in generating a Monte Carlo sample of the TDI database. We find the AFD confidence of the sampled possible world, and weigh it with the probability of the possible world. We repeat this process till the the weighted average of the AFD values observed converges. That is reported as the confidence of the pAFD.
\smallskip

\begin{figure*}
\begin{minipage}[b]{0.45\linewidth}
\centering
\includegraphics[keepaspectratio, width=240pt, clip, trim=0pt 140pt 120pt 0pt] {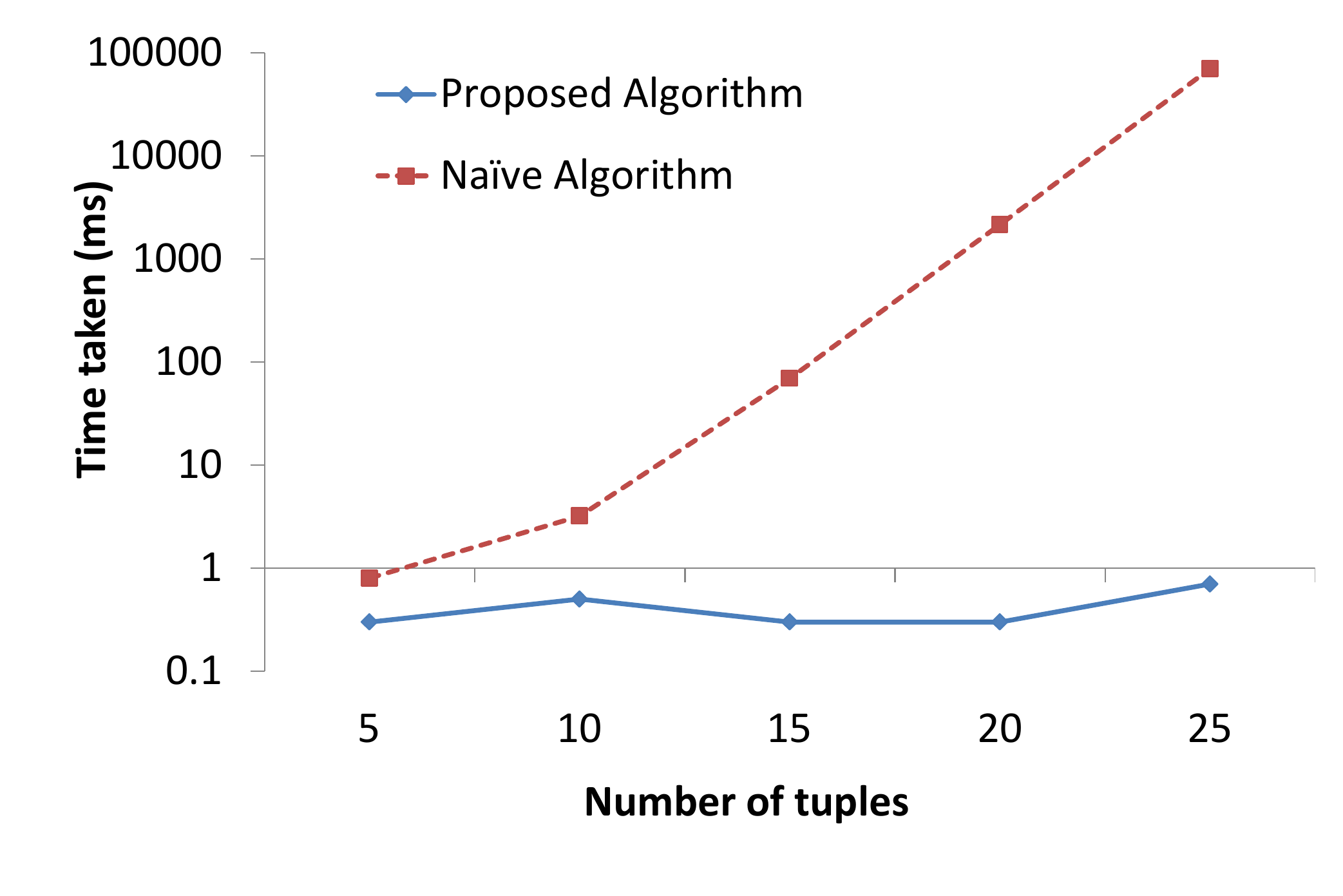}
\vspace{-0.4in}
\caption{Comparison between the time taken by the na\"ive algorithm and the proposed algorithm for the confidence of a pFD on a log-scale.}
\label{fig-pfd-naive-vs-prune}
\end{minipage}
\hspace{20pt}
\begin{minipage}[b]{0.45\linewidth}
\centering
\includegraphics[keepaspectratio, width=240pt, clip, trim=0pt 140pt 120pt 0pt] {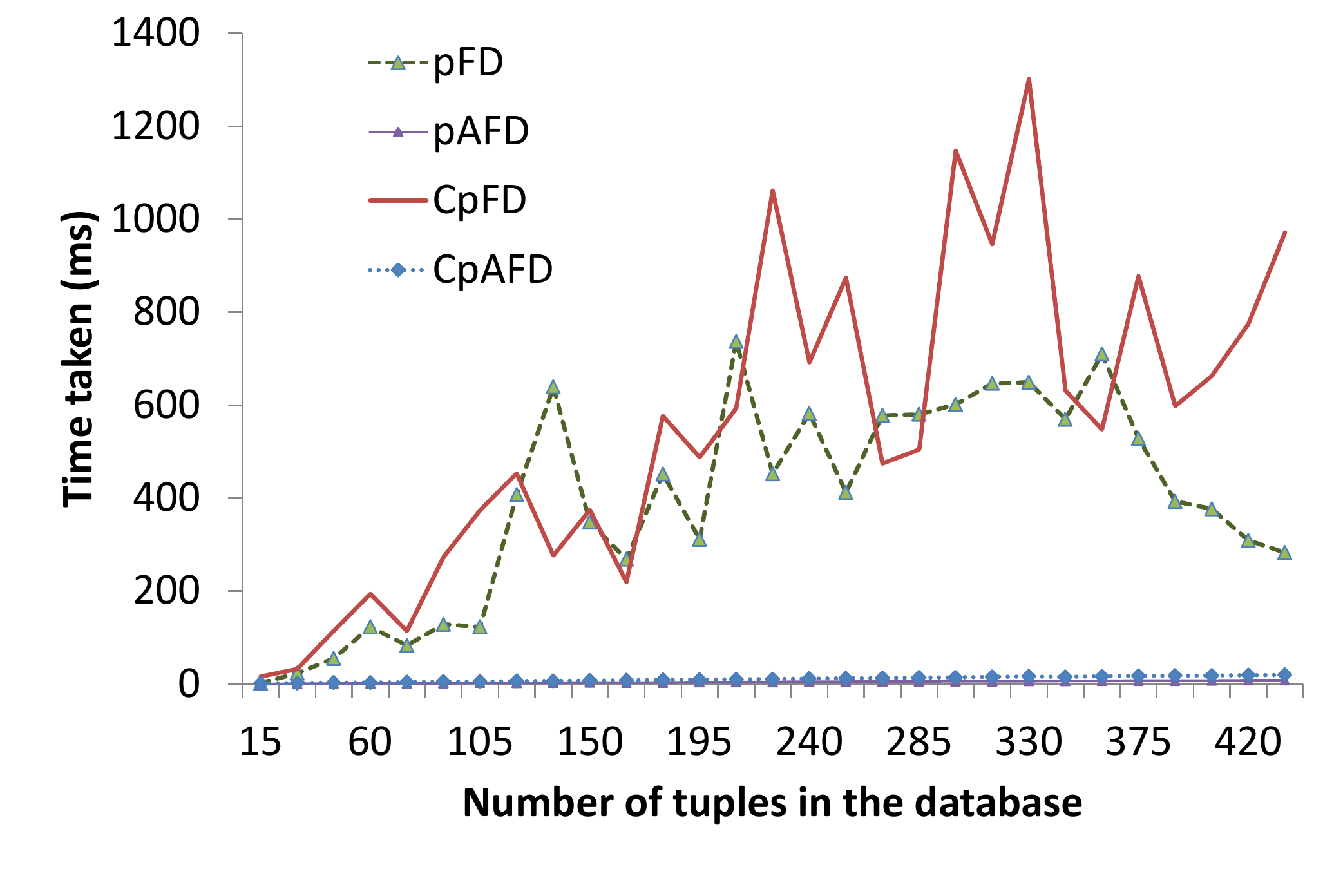}
\vspace{-0.4in}
\caption{ Comparison of time taken for the algorithms for various dependencies to run vs the number of tuples.}
\label{fig-synth-time-vs-tuples}
\end{minipage}
\end{figure*}

\subsection{Assessing the confidence of a CpFD}
\label{sec-algo-cpfd}

Our strategy to find the confidence of the conditional dependencies -- the CpFD and the CpAFD -- is a simple one. We first select the tuples from the database that match the pattern tableau, and then we run our corresponding pFD or pAFD on the resulting relation.

As Dalvi and Suciu show in \cite{dalvi2007efficient}, query processing on a probabilistic database can be a \#P-hard problem. However, there are certain queries that are guaranteed to have {\it safe-plans} which can be evaluated in polynomial time. Fortunately, selecting the tuples matching a pattern tableau is a safe query, and can be efficiently evaluated using the algorithms in \cite{dalvi2007efficient}.

We follow Bohannon {\it et. al.} \cite{bohannon2007conditional} for the query to find tuples that do not match the pattern tableau, appropriately modified for a probabilistic relation. Given a probabilistic relation $R$ with attributes $A_1, ... A_n$, and a CpFD $(X \rightarrow Y, T_p)$, we can use the following query to find the probabilistic options of the tuples that do not match the pattern tableau $T_p$ and hence can be removed from consideration:
\begin{align*}
&\,&\mathbf{select } &\; t \; \mathbf{from } \;R\, t, T_p\, t_p \\
&\,&\mathbf{where } &\; \textsc{not}( t[X_1] \asymp t_p[X_1] \,\textsc{and} ... \textsc{and} \,t[X_n] \asymp t_p[X_n])
\end{align*}
Here, $t[X] \asymp t_p[X]$ represents the condition that either $t[X] = t_p[X]$ or $t_p[X] = \_$. We replace all these tuples with the special symbol $\circledcirc$. Then we run the following query to find all tuples that violate the pattern tableau:
\begin{align*}
&\,&\mathbf{select } &\; t \; \mathbf{from } \;R\, t, T_p\, t_p \\
&\,&\mathbf{where } &\;  t[X_1] \asymp t_p[X_1] \,\textsc{and} ... \textsc{and} \,t[X_n] \asymp t_p[X_n] \, \textsc{and}\\
&\,&\, &\; (t[Y_1] \nasymp t_p[Y_1] \, \textsc{or} \, ... \, \textsc{or} t[Y_n] \nasymp t_p[Y_n]) 
\end{align*}
Here $t[Y] \nasymp t_p[Y]$ represents the condition that both $t[Y] \neq t_p[Y]$ and $t_p[Y] \neq \_$. We replace these options with the special symbol $\phi$ to denote that it violates the tableau. The main difference from \cite{bohannon2007conditional} in finding mismatches is that they found entire tuples, but here we find options of tuples. 

We now run our algorithm of section \ref{sec-algo-pfd} on this modified relation. Whenever we encounter the $\circledcirc$ symbol, we treat it as if the tuple does not exist. Whenever we encounter the $\phi$ symbol, we treat it as if it violates the current set of rules. The resulting confidence is the confidence of the CpFD.

We illustrate this process with an example. Consider the database of Figure \ref{fig-pafd-importance}. Consider the CpFD $(Color \rightarrow Type, T_1)$, where $T_1$ is the single-tuple $(Red, Star)$. We find that any tuple that does not have $Red$ for the $Color$ attribute matches the first query and is replaced with $\circledcirc$. The only two remaining options are {\it (Sirius, Red, Star)} and {\it (Taurus, Red, Nebula)}. since $Nebula \nasymp Star$, the {\it (Taurus, Red, Nebula)} option matches the second query, and thus the tuple is replaced with a $\phi$. We then run the pFD algorithm and obtain the confidence 0.5.

\subsection{Assessing the confidence of a CpAFD}
\label{sec-algo-cpafd}
We follow the same principle as Section \ref{sec-algo-cpfd}. However, in this case we use the Monte Carlo algorithm of Section \ref{sec-algo-pafd} instead of the algorithm for pFD to evaluate the confidence over the resulting relation.

\begin{figure*}[ht]

\begin{minipage}[b]{0.45\linewidth}
\vspace{20pt}
\centering
\includegraphics[keepaspectratio, width=240pt, clip, trim=0pt 140pt 40pt 0pt] {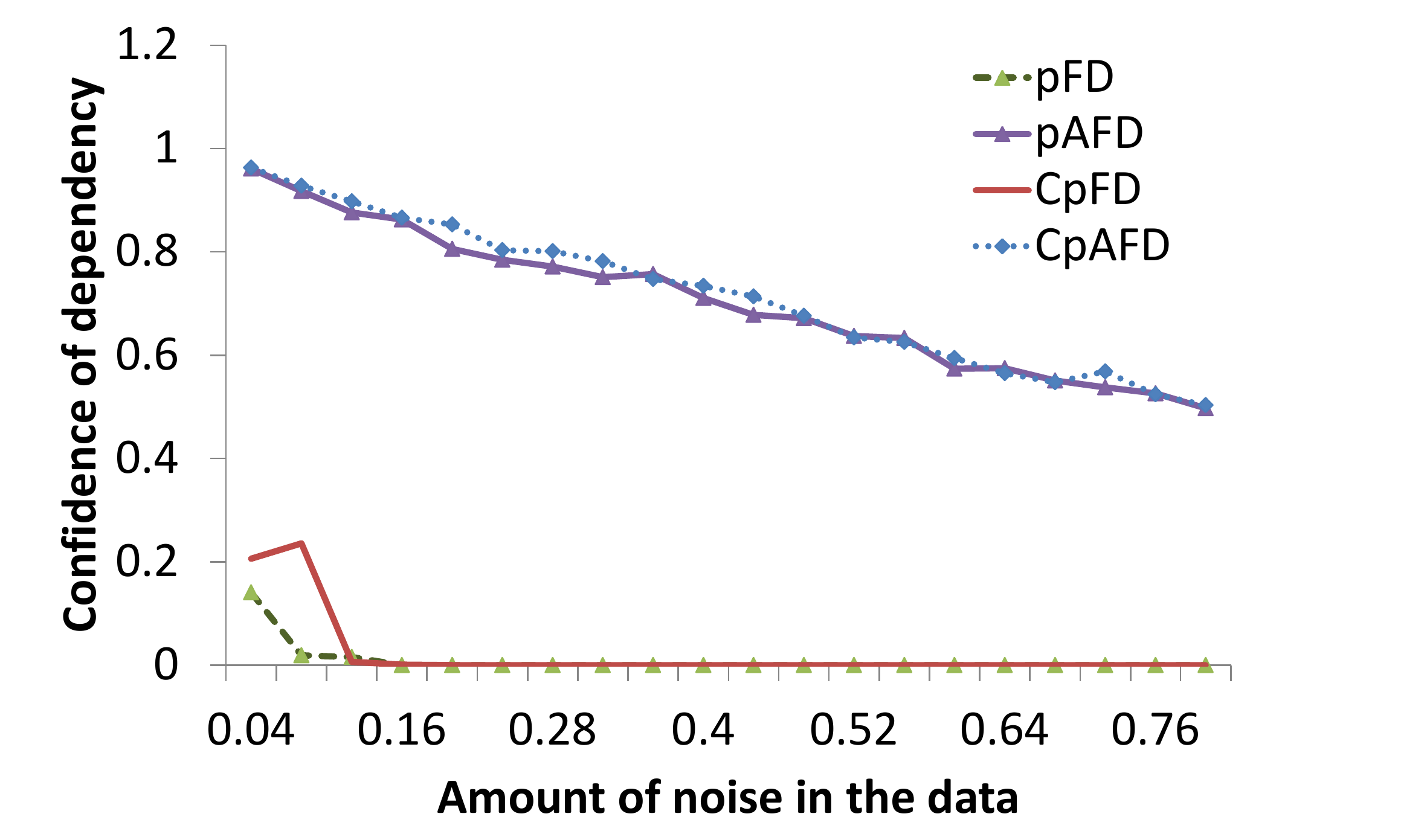}
\vspace{-0.2in}
\caption{Comparison between the average confidence reported for the dependencies in a database for varying noise.}
\vspace{-0.2in}
\label{fig-synth-conf-vs-noise}
\end{minipage}
\hspace{20pt}
\begin{minipage}[b]{0.45\linewidth}
\centering
\includegraphics[keepaspectratio, width=240pt, clip, trim=0pt 140pt 40pt 0pt] {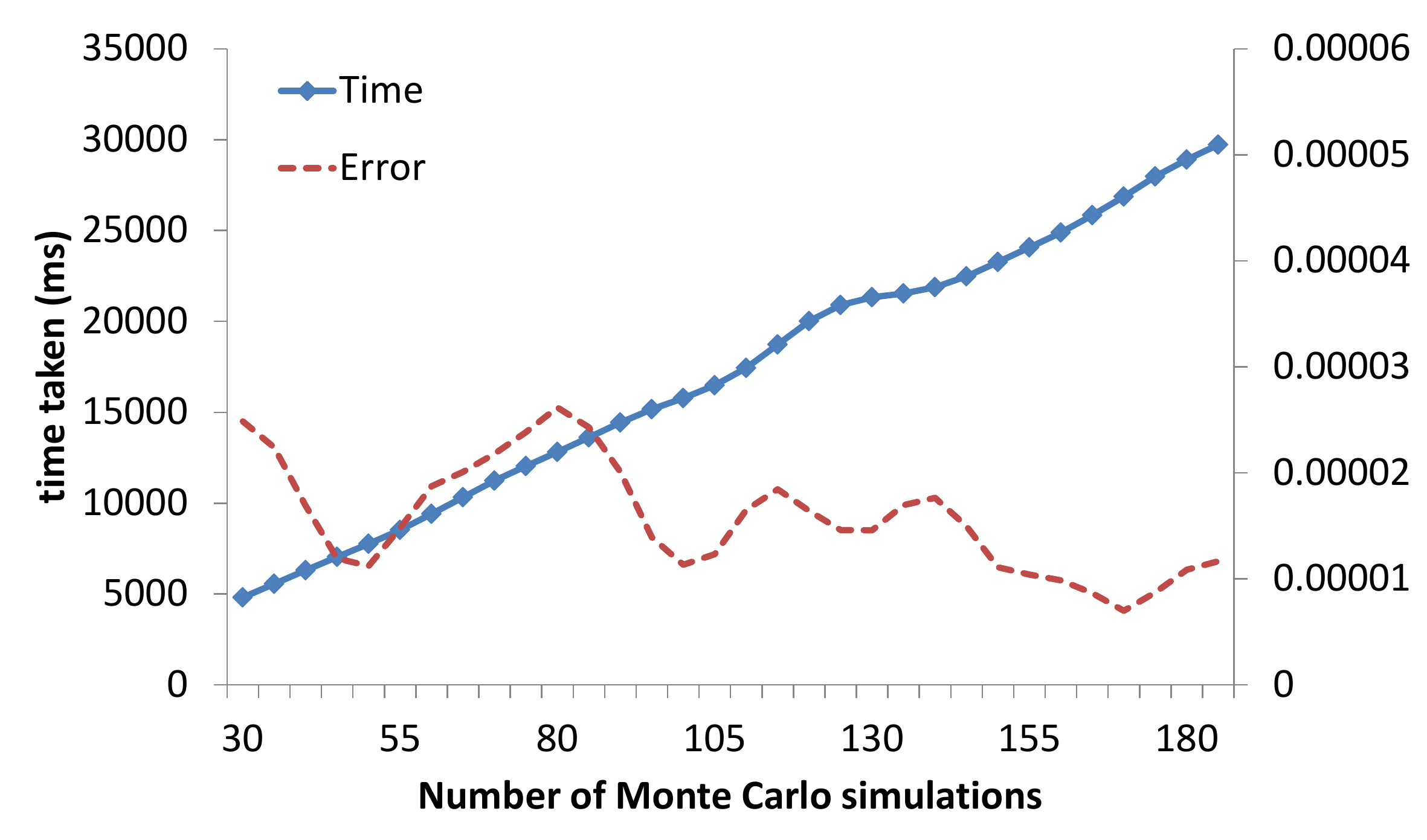}
\vspace{-0.2in}
\caption{The average error and the time taken vs the number of Monte Carlo simulations for a 200,000 tuple database of DBLP data.}
\vspace{-0.2in}
\label{fig-real-time-n-error-vs-sims}
\end{minipage}
\end{figure*}

\section{Experimental Evaluation}
\label{sec-experiments}

We empirically verify our algorithms using two sets of data -- one generated synthetically, and the other real data extracted from DBLP.

\subsection{Synthetic Data}
\label{sec-synthetic-data}
{\bf Data:} We are using the tuple-disjoint independent model. We generate synthetic data by creating a TDI database with $n$ tuples, each tuple having 2 options, and the options generated from the domain of the attributes of cardinality $m$. Each option may choose not to follow the specified FD with the probability called $noise$.

{\bf Results:} We evaluate the time taken by the pFD algorithm to run on synthetic data, while varying the number of tuples in the data. We compare this performance with the na\"ive algorithm, which would enumerate all the possible worlds. The results are shown in Figure \ref{fig-pfd-naive-vs-prune}. As can be clearly seen from the graph, the time taken by the na\"ive algorithm grows exponentially, and quickly becomes infeasible to run, even with as few as 30 tuples. 

In Figure \ref{fig-synth-time-vs-tuples}, we compare the time taken by the algorithms for the calculating the confidence of various dependencies. The cardinality of the domain of the attribute is held constant, while the number of tuples is increased. The Monte Carlo approximation algorithms for pAFD and CpAFD take significantly less time to converge compared to the pFD and CpFD algorithms. When plotted on a separate graph, it can be seen that they grow approximately linearly with the number of tuples. The pFD and CpFD show a general increasing trend with the number of tuples. The fluctuation observed is due to the algorithm quickly finding conflicting data and terminating early in some cases.

We also assessed the robustness of the dependencies to noise in the data in Figure \ref{fig-synth-conf-vs-noise}. We observe that with slight introduction of corruption, the confidences of a pFD and CpFD drop sharply. The confidence of a pAFD does not fall too sharply, which shows us that when the data is likely to be noisy, pAFD should be used to mine dependencies. However, for  data rectification, or in cases where the data is guaranteed to be clean, pFD will come in very useful, since the confidence value is very sensitive. 


\pagebreak
\subsection{DBLP Data}
\label{sec-real-data}
We use a set of data modified from \cite{kimura2010upi}. The database consists of DBLP \cite{ley2009dblp} data, with additional probabilistic attributes added to it by various information retrieval and machine learning sources. We use the ``Author" relation from this source, which contains information about approximately 700,000 computer science authors. This table has some deterministic attributes such as {\it Name, MinYearOfPublication, MaxYearOfPublication, NumPublication}. It also has the following uncertain attributes: {\it Institute, Country, Domain, Region and Subregion}.  We modify this dataset by re-indexing it and converting it into a tuple-disjoint independent format.

In Figure \ref{fig-real-time-n-error-vs-sims} we show the results of running the Monte Carlo pAFD algorithm on a 200,000-tuple subset of this dataset to evaluate the confidence of the dependency {\it Institute} $\rightsquigarrow Country$. We show how the accuracy of the evaluated confidence varies with the number of Monte Carlo simulations. We see that with increase in the number of simulations, the time taken increases, and the average error decreases, as expected. We terminate the simulation once the computed confidence converges. From this graph it is apparent that the it takes only around 100 simulations before the value stabilizes and the algorithm can be terminated.

In Figure \ref{fig-pafd-mc-vs-union} we show the confidence values of the pAFD mined from this data for all 700,000 tuples using two different approaches. As shown in Section \ref{sec-algo-pafd}, we can approximate the the confidence of a pAFD using three different approaches - the deterministic, union and Monte Carlo approximations. This experiment shows that even the union approximation method gives significant differences in confidence values. In the context of mining the dependencies we would typically choose dependencies by placing a threshold on the confidence values or by taking the top-k mined dependencies. The difference we observed in the confidence values is significant enough that the union method would give different dependencies when mining. We also see that the unioned method can both underestimate (e.g. the first dependency in Figure \ref{fig-pafd-mc-vs-union}) and overestimate the probabilities (e.g. the last dependency). Thus it seems that the Monte Carlo method is most suited to finding the confidence of pAFDs.

\begin{figure*}[ht]
\begin{minipage}[b]{0.3\linewidth}
\centering
\includegraphics[keepaspectratio, width=160pt, clip, trim=0pt 330pt 420pt 0pt] {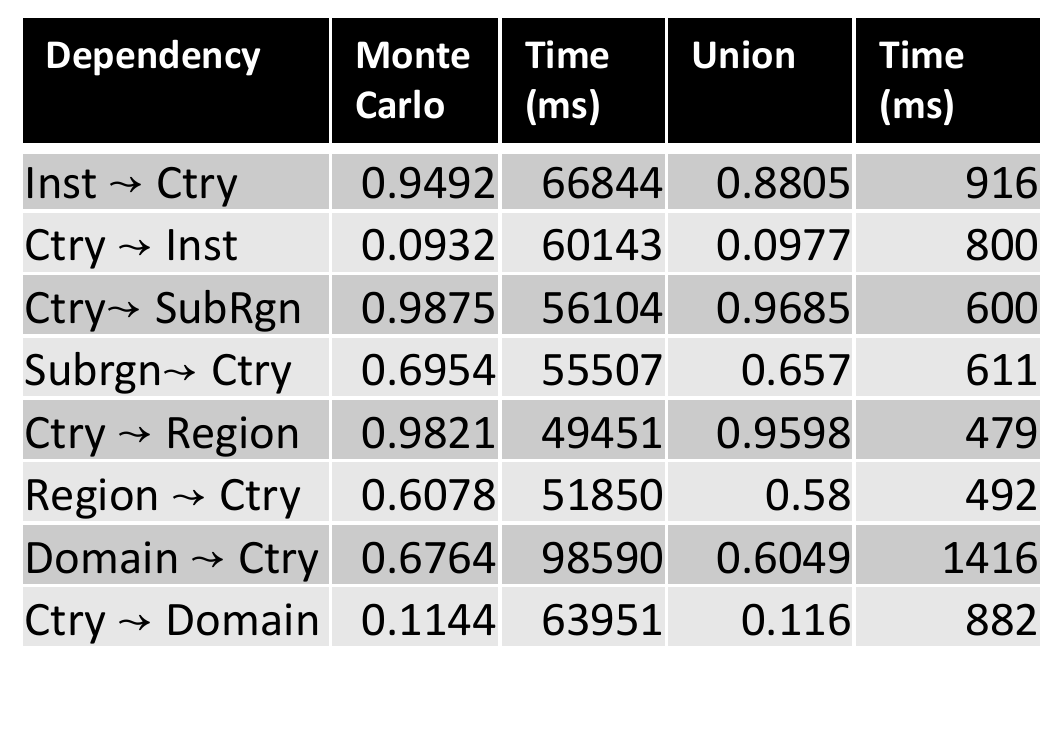}
\vspace{-0.2in}
\caption{The confidence of pAFD and time taken as computed by Monte Carlo method vs the union method.}
\vspace{-0.2in}
\label{fig-pafd-mc-vs-union}
\end{minipage}
\hspace{20pt}
\begin{minipage}[b]{0.3\linewidth}
\centering
\includegraphics[keepaspectratio, width=140pt, clip, trim=0pt 280pt 420pt 0pt] {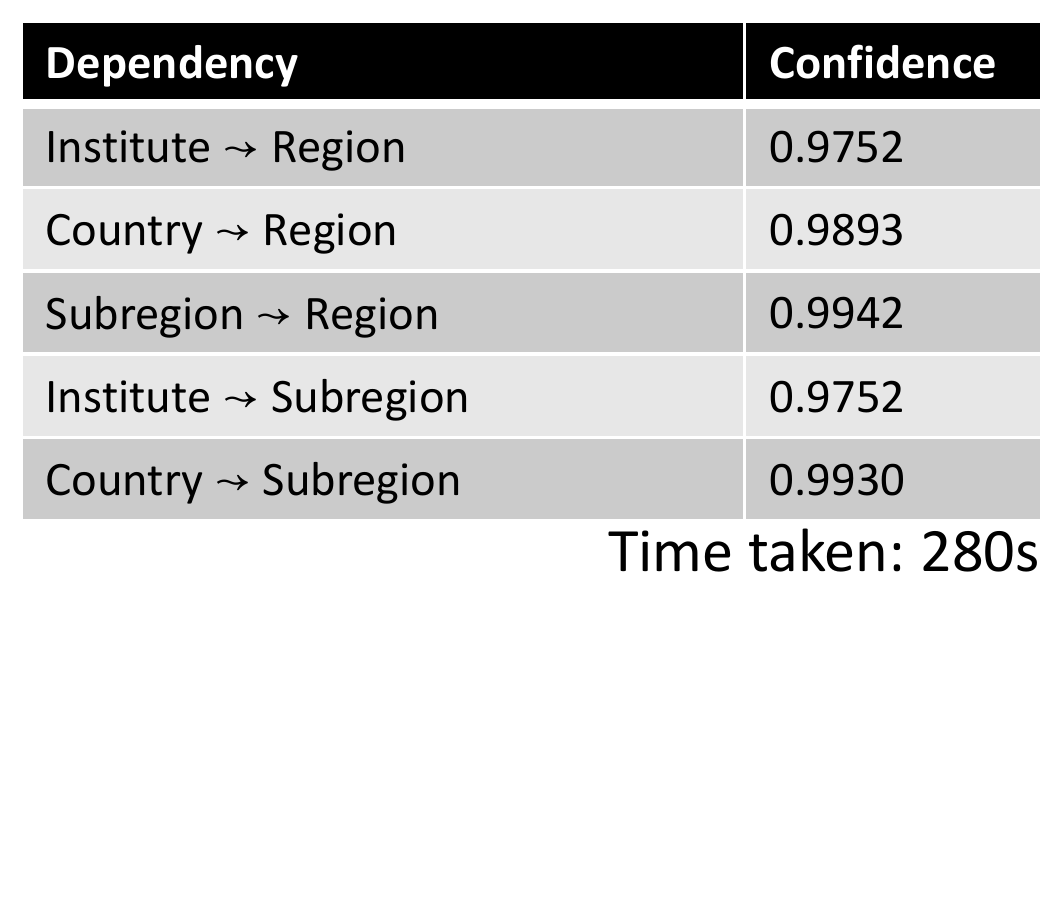}
\vspace{-0.2in}
\caption{The dependencies discovered in DBLP data by mining pAFDs for specificity threshold = 0.3.}
\vspace{-0.2in}
\label{fig-mined-dep-3}
\end{minipage}
\hspace{20pt}
\begin{minipage}[b]{0.3\linewidth}
\centering
\includegraphics[keepaspectratio, width=140pt, clip, trim=0pt 280pt 420pt 0pt] {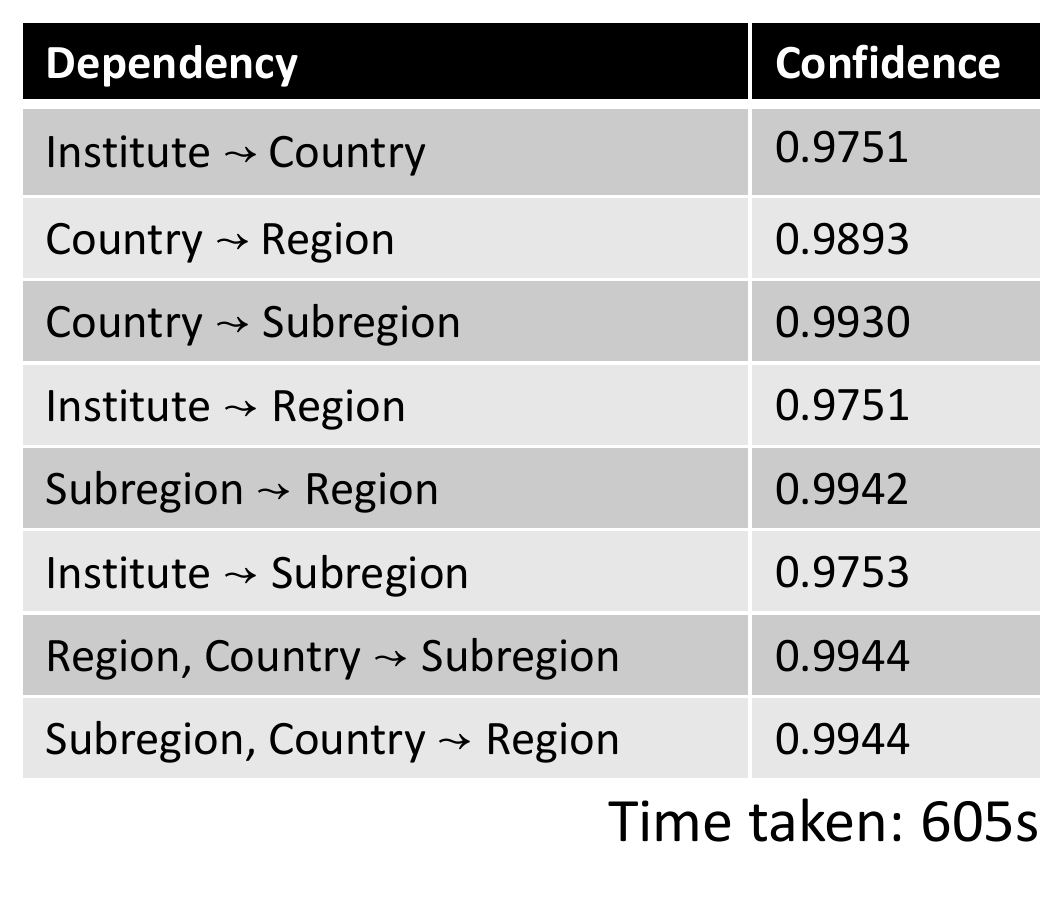}
\vspace{-0.2in}
\caption{The dependencies discovered in DBLP data by mining pAFDs with specificity threshold = 0.6.}
\vspace{-0.2in}
\label{fig-mined-dep-6}
\end{minipage}
\end{figure*}


\pagebreak
\subsection{Dependency mining}
\label{sec-dep-mining}

In order to mine pAFD dependencies, we adapted the $AFDMiner$ algorithm as described in \cite{wolf2009query}. We start with 200,000 tuples from the DBLP dataset. We choose four uncertain attributes {\it Institute, Country, Region and Subregion}. We build a heirarchy of possible dependencies (the set-con\-tain\-ment lattice of the attributes). Using the $AFDMiner$ algorithm, we prune our search space using two criteria: the redundancy and specificity. The redundancy condition prunes those dependencies that are guaranteed to hold since they are subsumed by the current set of dependencies. The specificity condition prunes those dependencies that are too specific to be considered as pAFDs. For each candidate dependency that did not get pruned off, we compute the confidence using the Monte Carlo algorithm. The exact details of the adaptation is described in the Appendix \ref{apx-afdminer}.

Figures \ref{fig-mined-dep-3} and \ref{fig-mined-dep-6} show dependencies mined from the DBLP data using two different thresholds for specificity. In the first case we set a very low specificity requirement, causing us to discover only those dependencies that are likely to be more general across the data. In the second case we set a high specificity threshold value, allowing much more specific dependencies to be also discovered. As we can see, this results in more dependencies being discovered, at the cost of quality of the dependencies.

\section{Related work}
\label{sec-relatedwork}

There is a large body of research that talks about association rules \cite{agrawal-fast} and itemsets \cite{brin1997dynamic}, more commonly known as the market-basket analysis problem. This work on association rules was recently improved by Kalavagattu \cite{aravind2008thesis} to include pruning based on specificity and to roll them up into approximate functional dependencies. AFDs have also been used to mine attribute correlations on autonomous web databases by Wolf {\it et al.} \cite{wolf2009query}. They have also been used by Wang {\it et al.} \cite{Wang2009} to find dirty data sources and normalize large mediated schemas.
FDs have also been generalized into conditional functional dependencies. Their role in data cleaning was shown by Bohannon {\it et al.} \cite{bohannon2007conditional}.

Sarma {\it et al.} extended FDs to probabilistic data in \cite{sarma2009schema}, however, in that paper the dependencies that were proposed were appropriate for schema normalization, but were inappropriate for discovering hidden relationships in data. Specifically, the horizontal dependencies specified can detect databases where the FD holds either in the union of all probabilistic tuples, each tuple individually, or within a specific tuple. The first two of these cases are intolerant to any noise in the data. The last one needs a single tuple to be specified, which is not holistic enough to discover any patterns. Such dependencies are ideal for schema normalization, since they allow the tables to be decomposed and simpler schema to be built, but it is not appropriate for discovering data patterns which needs to be fault tolerant.

Monte Carlo methods have been used in probabilistic databases before, for example, \cite{dalvi2007efficient} uses Monte Carlo methods to give top-k results for queries on probabilistic databases. A more general framework for probabilistic databases is proposed by Jampani {\it et. al.} in \cite{jampani2008mcdb} where the uncertainty is represented by parameters instead of probabilities, so that a more generalized model of uncertainty can be represented.

In \cite{gupta2006creating}, Gupta and Sarawagi demonstrate how to create probabilistic databases that are an approximation of an information extraction model and find that using the appropriate model of uncertainty in a database is important. If the model of uncertainty  is  too simple then interactions between elements of the generating model cannot be represented; if it is too complex then querying becomes inefficient. Similarly, in this paper, we are proposing the right level of uncertainty, but for functional dependencies. We show that using probabilistic semantics does cause a significant change in the confidence of the dependencies, and we show efficient algorithms that find these dependencies.

\section{Conclusions}
\label{sec-conclusions}
In this paper we defined a spectrum dependencies for probabilistic databases. These dependencies are logical extensions of their deterministic counterparts. We explained how these dependencies are related to each other. We showed that pAFD would always have a larger confidence than the pFD. We showed the CpAFDs were the most general of all and that it subsumed every other kind of dependency. We then presented algorithms to assess the confidence of each of these dependencies. We empirically verified the algorithms -- the ones for pFD and CpFD were exponential in the number of values of the attribute, and approximately linear in the number of tuples. The Monte Carlo algorithms for the approximate dependencies converged fast and were accurate. We also showed experiments with real data that demonstrated that the Monte Carlo algorithm converges quickly. Finally we showed how we can use these algorithms to effectively mine dependencies from a real probabilistic database and discover useful dependencies. We are currently exploring the use of these dependencies in the {\sc qpiad} project \cite{wolf2009query}.


\bibliographystyle{abbrv}
{
\small\bibliography{pdb}  
}
\pagebreak
\section{Appendix}
\label{sec-appendix}

\subsection{Adapting the algorithms for TI databases}
\label{apx-algo-for-ti}
We can use the pFD algorithm for TI databases with a slight modification. Recall the symbol $\circledcirc$ introduced in Section \ref{sec-algo-cpfd} for handling options eliminated for not matching the pattern tableau. The symbol represents that in further computation, the option will be ignored for the purpose of computing the confidence, that is, it will not conflict with any existing rule. For every tuple in the TI database whose probability $p$ is less than 1, we convert it into a TDI database by adding an option to it consisting of the $\circledcirc$ symbol and probability $1-p$. We then have a TDI database which is essentially equivalent to the TI database. We can now apply the pFD algorithm on this database to assess its confidence.

The union approximation for assessing the confidence of a pAFD from Section \ref{sec-algo-pafd} can be applied directly to the TI database to get the accurate value of the pAFD.

\subsection{Adapting specificity for probabilistic databases}
\label{apx-specificity}
In this section we will first introduce the notion of specificity as described by Kalavagattu and Wolf {\it et al.} \cite{aravind2008thesis, wolf2009query} for deterministic databases and then show how it is adapted to probabilistic databases.

{\bf Deterministic databases:} The distribution of values for the determining set is an important measure to judge the ``usefulness" of an AFD. For an AFD $X \rightsquigarrow A$, having fewer distinct values of $X$ means that there exist more tuples in the database that have the same values of $X$. This makes the AFD potentially more relevant. This is because if every value of $X$ is distinct (i.e. it is a key) then the AFD trivially holds; however if the dependency holds in spite of $X$ having only a few distinct values, the AFD has a deeper semantic meaning.

To quantify this, we first define  the {\em support of a value} $\alpha_{i}$ of an attribute set $X$, $support(\alpha_{i})$, as the occurrence frequency of value $\alpha_{i}$ in the training set. The support is defined as 
$support (\alpha_{i}) =  count(\alpha_{i})/N,$ where $N$ is the number of tuples in the training set.

Now we measure how the values of an attribute set $X$ are
distributed using {\it specificity}. Specificity is defined as the
information entropy of the set of all possible values of attribute
set $X$:
 \{$\alpha_{1}$, $\alpha_{2}$, \ldots, $\alpha_{m}$ \},
normalized by the maximal possible  entropy (which is achieved when
$X$ is a key). Thus, specificity is a value that lies between 0 and
1.

\vspace*{-0.2in}
\begin{eqnarray}
 \emph{specificity ($X$)}
&=& \frac{- \sum_{1}^{m}
support(\alpha_{i})\times\log_{2}(support(\alpha_{i}))}{\log_{2}(\emph{N})}
\nonumber
\end{eqnarray}

When there is only one possible value of $X$, then this value has
the maximum support and is the least specific, thus we have
specificity equals to 0.
When there are many distinct values in $X$, each having a low support and are specific,
we have a high value of  specificity.
When all values of $X$ are distinct (when $X$ is a key), each value has the minimum support and is most specific and has specificity equal to 1.

Now we overload the concept of specificity on AFDs. The specificity of
an AFD is defined as the specificity of its determining set. i.e. 
$\emph{specificity ($X \rightsquigarrow A$)}  = \emph{specificity ($X$)}$.
The lower specificity  of an AFD, potentially the more  relevant
possible answers can be retrieved using the rewritten queries
generated by this AFD, and thus a higher recall for a given number
of rewritten queries.

Intuitively, specificity increases when the number of distinct
values for a set of attributes increases. Consider two attribute sets $X$ and $Y$ such that Y$\supset$X. Since $Y$ has more
attributes than $X$, the number of distinct values of $Y$ is no less
than that of $X$, specificity($Y$) is no less than specificity($X$).

{\bf Probabilistic Databases:} In a probabilistic database, the specificity of an attribute set $X$ would be defined as the weighted average of the specificity of $X$ in each possible world. Computing this is potentially exponential in the number of tuples, since every possible world will have a different set of association rules with different support.

In this paper, we are using specificity to prune our search space. We need to be able to compute the specificity very quickly so that we do not spend too much time deciding whether or not to prune the current subspace of dependencies. As a result, we decide to approximate the computation of specificity by using a method similar to the union method described in Section \ref{sec-algo-pafd}. We ignore the intra-tuple correlations, and create a TI database by taking the union of all the options in our TDI database. Computing the specificity of a TI database is a straightforward adaptation of the deterministic algorithm. The definition for  specificity($X$) remains the same, but we redefine $support(\alpha_i)$ as (where $t$ represents all the tuples in the TI database):
\[support(\alpha_i) = \displaystyle\sum_{\alpha_i \in t}{prob(t)}/\sum{prob(t)}\]

\subsection{Adaptations to AFDMiner}
\label{apx-afdminer}
We adapt the {\it AFDMiner} algorithm from Wolf {\it et al.} \cite{wolf2009query} to mine dependencies in our data. In this section we describe the outline of the algorithm, and the adaptations to probabilistic data.

The algorithm searches through the set-containment lattice of the attributes of the relation. This lattice consists of all possible sets of attributes. Each set of attribute has a directed edge that points to all sets that contain one attribute more than itself. The algorithm performs a breadth-first search through this lattice, starting with the null set of attributes and working its way up to the set of all attributes. For each directed edge $(X, X \cup \{A\})$ the algorithm travels along, the dependencies $(X \rightsquigarrow A)$ are tested. {\it AFDMiner} outputs those dependencies whose confidence is larger than the supplied confidence threshold. We adapt {\it AFDMiner} by supplying our own confidence assessing functions for pAFD.

{\em Pruning:} Each attribute set $X$ is tested for its specificity value. If the value is higher than the specificity threshold, then all outgoing edges from that set are removed from the lattice. This lets the algorithm prune the space of dependencies whose body is $X$ or its superset, since they are guaranteed to be above the specificity threshold. We use the specificity definition from Section \ref{apx-specificity} for this algorithm.

AFDMiner further prunes the space of dependencies based on redundancy. For any dependencies that hold exactly, any superset of the dependency would also hold (by Armstrong's Axioms), and hence need not to be checked. So, effectively, a list of exact FDs is maintained, and any superset of the FDs in this list are not checked. In our case, the algorithm to check for a pFD is significantly more expensive than the algorithm to compute the confidence of a pAFD. As a result, we adapt this condition by replacing FDs with {\em very high confidence} pAFDs. For any pAFD that has a confidence larger than the preset high-confidence threshold, we prune the outgoing edges from that attribute.

\balancecolumns
\end{document}